\theoremstyle{plain}
\newtheorem{theorem}{Theorem}[section]
\newtheorem{lemma}[theorem]{Lemma}
\theoremstyle{definition}
\theoremstyle{remark}
\numberwithin{equation}{section}
\title[Asymptotic formula  of the Stokes problem]{Asymptotic formula for the solution of the Stokes problem with a small perturbation of the domain in two and three dimensions}
\author[T.H.C. LUONG]{Thi Hong Cam LUONG}
\address{CNRS \& Laboratoire de Math\'ematiques (UMR 8088), Universit\'e de Cergy-Pontoise, F-95000 Cergy-Pontoise, France.}
\email{thi-hong-cam.luong@u-cergy.fr}
\author[C. DAVEAU]{Christian DAVEAU}
\address{CNRS \& Laboratoire de Math\'ematiques (UMR 8088), Universit\'e de Cergy-Pontoise, F-95000 Cergy-Pontoise, France.}
\email{christian.daveau@u-cergy.fr}
\date{April 8, 2013}
\subjclass[2010]{45B05}
\begin{document}

\begin{abstract}
In this paper we consider  the resolvent Stokes problem in the case there is a small perturbation of the domain caused by a perturbed boundary. Firstly, we prove that the solution of Stokes problem is continuous due to this small perturbation. Secondly, we derive the first-order term in the displacement field perturbation that due to the deformation of the domain. It is worth emphasizing that
even though only the first-order term is given, % in~\ref{ex1} and~\ref{ex2} 
our method enables us
to derive higher-order terms as well. The derivation is rigorous and based on layer potential  techniques.\end{abstract}

\maketitle

\setcounter{tocdepth}{2}
\tableofcontents

\section{Introduction}

Let $\Omega \subset {\mathbb{R}}^d~( d=2,3)$  be a bounded domain with
boundary $\partial \Omega$ of class ${\mathcal {C}}^2$. We consider the  Stokes resolvent system
%concerning a vector function $u(x)$ and a scalar function $p(x)$ satisfying
\begin{equation}\label{1}
\left \{
\begin{array}{lll}
-\Delta u + \nabla p = \lambda u & \mbox{in} & \Omega\\
\nabla \cdot u=0 & \mbox{in} & \Omega\\
u=g & \mbox{on} & \partial \Omega.
\end{array} \right.
\end{equation}
with $u(x)$ is the velocity of the fluid, $p(x)$ is the pressure, and $g(x)$ is the boundary condition.
%with $g$ satisfying the compatibility condition $$\int_{\partial \Omega} g\cdot n d\Gamma= 0.$$

Denote by $\nu$ the outward unit normal to $\partial \Omega$, let $\partial{\Omega_\delta}$ be a $\delta-$perturabation of $\partial\Omega$ as follows:\\
 In the  two-dimensional case, let $\rho\in {\mathcal {C}}^1(\partial \Omega)$, we consider the perturbed boundary $\partial \Omega_\delta$ be given by
%\begin{equation}\label{p1}
\[ \partial \Omega_\delta = \{\widetilde{x}:\widetilde{x}=x+\delta \rho(x)\nu (x), x \in \partial \Omega \};\]
%\end{equation}
 and in the three-dimensional case, we consider  the parallel surface
 %\begin{equation}\label{p2}
\[ \partial \Omega_\delta = \{\widetilde{x}:\widetilde{x}=x+\delta \nu (x), x \in \partial \Omega\}.\]

Here we consider the parameter $\delta > 0$ which tends to $0$.
 %\end{equation}
%where $\nu (x)$ is the outward normal vector on $\partial \Omega$, $\delta\in [0, \delta_0)$ is very small and $\|\rho (x)\|_{C^1(\partial \Omega)}\leq 1$. \\
%We suppose that for the given function $g$ there is the extension, denote $g_\delta$, such that the extended function is defined on the set $\partial \Omega_\delta, \forall \delta \in [0, \delta_0)$, and satisfies the compatibility condition $$\int_{\Omega_\delta} g_\delta \cdot n_\delta d\Gamma_\delta= 0.$$
We assume that $g$ is an analytic function in a small neighborhood of $\partial \Omega$, and denote by $g_\delta$  the extended function of $g$ onto $\partial \Omega_\delta$.
Then denote by  $u_\delta$  the solution of the perturbation problem
\begin{equation}\label{new}
\left \{
\begin{array}{lll}
-\Delta u_\delta + \nabla p_\delta = \lambda u_\delta & \mbox{in} & \Omega_\delta \\
\nabla \cdot u_\delta=0 & \mbox{in} & \Omega_\delta\\
u_\delta =g_\delta & \mbox{on} & \partial {\Omega_\delta}.
\end{array} \right.
\end{equation}

In this paper we  verify the continuity of the solution $u_\delta$ with respect to $\delta$  and derive the asymptotic expansion of $(u_\delta-u)|_{\Omega^\circ}$ as $\delta$ tends to $0$, where $\Omega^\circ$ is any closed subset of $\Omega\cap \Omega_\delta$.
The asymptotic expansion is derived due to the theory of layer potentials and Fredholm's alternative, and the properties of
small perturbation of an interface. In connection with this, we refer to recent works in the context
of interface problems [1], [2], [7],  the continuity of the solution due to a small perturbation of an interface [7] and layer-potential theory for Stokes problem [3], [4], [5], [6], [8], [9]. In the work of H.~Ammari, H.~Kang, M. ~Lim, and H.~Zribi [1] , they derived the asymptotic expansion of the boundary perturbations of steady-state voltage potentials, and now in our work, we derive the asymptotic expansion for the Stokes problem with Dirichlet boundary condition. However, by the same method, we can derive asymptotic formula for
the Neumann problem as well.

This paper is organized as follows. In the next section, we introduce some notations for
small perturbations of an interface of class ${\mathcal{C}}^2$, review some basic facts on the layer-potentials
and give representation formulas. In Section 3, we verify the continuity of the solution
with respect to $\delta$, and in Section 4, we derive asymptotic expansion for the displacement field perturbation in
term of $\delta$.
\section{Definitions and Preliminary results}
\subsection{Small perturbation of an interface}
\subsubsection{In the case $d=2$}
 Let $a, b \in \mathbb{R}$, with $a<b$, and let
$X(t):[a,b]\rightarrow \mathbb{R}^2$ be the arclength
parametrization of $\partial \Omega$, namely, $X$ is a
${\mathcal{C}}^2$-function satisfying $|X'(t)=1|$ for all $t\in [a,b]$
and
$$\partial \Omega :=\{x=X(t), t\in [a,b]\}.$$
Then the outward unit normal to $\partial \Omega$, $\nu(x)$, is given by $\nu(x)=R_{\frac{-\pi}{2}}X'(t)$, where $R_{\frac{-\pi}{2}}$ is the rotation by $-\pi/2$; the tangential vector at $x$, $T(x)=X'(t)$, and $X'(t)\bot X''(t)$. Set the curvature $\tau(x)$ to be defined by
$$X''(t)=\tau(x)\nu(x).$$ 

We will sometimes use $\rho(x)$ for $\rho(X(t))$ and $\rho'(x)$ for the tangential derivative of $\rho(x)$. \\

Then, $\widetilde{x}=\widetilde{X}(t)=X(t)+\delta \rho(x) \nu (x)=X(t)+\delta \rho(x)R_{\frac{-\pi}{2}}X'(t) $ is a parametrization of $\partial \Omega_\delta$.  We denote by ${\widetilde \nu}(\widetilde{x})$ the outward unit normal to $\partial  \Omega_\delta$ at $\widetilde{x}$. Then, it is proved in $[1]$ that $\widetilde{\nu}(\widetilde{x})$ can be expanded uniformly as
\begin{equation}\label{nv}
\widetilde{\nu}(\widetilde{x})=\sum^\infty_{n=0}\delta^n\nu^n(x),~\mbox{with} ~\widetilde{x}=x+\delta \rho(x)\nu (x),~\widetilde{x} \in \partial \Omega_\delta,~x \in \partial \Omega,
\end{equation}
where the vector-valued functions $\nu^n(x)$ are uniformly bounded regardless of $n$. In particular,
$$\nu^0(x)=\nu(x), \nu^1(x)=-\rho(x)T(x), \hspace{5mm} x\in \partial \Omega.$$

Likewise, denote by $d\sigma_\delta (\widetilde{x})$  the length element of $\partial \Omega_\delta$ at $\widetilde{x}$, which has an uniform expansion (see in [1])
\begin{equation}\label{length}
d\sigma_\delta (\widetilde{x})=|\widetilde{X}'(t)|dt=\sqrt{(1-\delta \tau (t)\rho(t))^2+\delta^2\rho'^2(t)}dt=\sum_{n=0}^\infty\delta^n\sigma^n(x)d\sigma(x), 
\end{equation}
 with $\widetilde{x} \in \partial \Omega_\delta,~x\in \partial \Omega$, and $\sigma^n$ are functions bounded regardless of $n$, with
 \[\sigma^0(x)=1, \sigma^1(x)=-\tau(x)\rho(x), x\in \partial \Omega.\]

 Let $\widetilde{x},\widetilde{y}\in \partial \Omega_\delta$, that is $\widetilde{x}=x+\delta\rho(x)\nu(x),~\widetilde{y}=y+\delta \rho(y)\nu(y)$, then
  $$\widetilde{x}-\widetilde{y}=x-y + \delta (\rho(x)\nu(x)-\rho(y)\nu(y)),$$
  and
  $$|\widetilde{x}-\widetilde{y}|^2=|x-y|^2(1+2\delta\frac{\langle x-y, \rho(x)\nu(x)-\rho(y) \nu(y)\rangle}{|x-y|^2} +\delta^2 \frac{|\rho(x)\nu(x)-\rho(y) \nu(y)|^2}{|x-y|^2}).$$
  Denote by
  $$E(x,y):=\frac{\langle x-y, \rho(x)\nu(x)-\rho(y) \nu(y)\rangle}{|x-y|^2}, G(x,y):=\frac{|\rho(x)\nu(x)-\rho(y) \nu(y)|^2}{|x-y|^2}.$$
Since $\partial \Omega$ is of class ${\mathcal{C}}^2$, we can see that
\begin{equation}\label{rm2}
 |E(x,y)|+|G(x,y)|^{\frac{1}{2}}\leq C\|X\|_{{\mathcal{C}}^2(\partial\Omega)}\|\rho\|_{{\mathcal{C}}^1(\partial \Omega)},~for~all~x,y\in\partial \Omega,
\end{equation}
and hence
$$|\widetilde{x}-\widetilde{y}|=|x-y|\sqrt{1+2\delta E(x,y)+\delta^2 G(x,y)}=|x-y|\sum_{n=0}^\infty\delta^nL_n(x,y),$$
where the series converges absolutely and uniformly. In particular, we can see that
$$L_0(x,y)=1, L_1(x,y)=E(x,y).$$

\subsubsection{In the case $d=3$}
Denote by $d\sigma_\delta(\widetilde{x})$ the surface element  of $\partial \Omega_\delta$  at $\widetilde{x}$. Thanks to the results in [2]  we get  the following  uniform expansion for $d\sigma_\delta(\widetilde{x})$
\begin{equation}\label{surface}
d\sigma_\delta(\widetilde{x})=\sum_{n=0}^\infty\delta^n\sigma^n(x)d\sigma(x), ~\mbox{with} ~\widetilde{x}=x+\delta \nu (x), \hspace{5mm}\widetilde{x} \in \partial \Omega_\delta,~x\in \partial \Omega.
\end{equation}
where $\sigma^n$ are functions bounded regardless of $n$, with
%\begin{equation}\label{sigma3D}
 \[\sigma^0(x)=1, \sigma^1(x)=-2H(x), \sigma^2(x)=  K (x),~ x\in \partial \Omega,\]
 %\end{equation}
 with $H$ and $K$ denote the mean and Gaussian curvature of $\partial \Omega$ respectively. \\
In the other hand, due to the parallel property of $\partial\Omega$ and $\partial \Omega_\delta$ we have
\begin{equation}\label{nv3}
\widetilde{\nu}(\widetilde{x})=\nu(x),  ~\mbox{with} ~\widetilde{x}=x+\delta \nu (x),~\widetilde{x} \in \partial \Omega_\delta,~ x\in \partial \Omega.
\end{equation}
 Let $\widetilde{x},\widetilde{y}\in \partial \Omega_\delta$, then $\widetilde{x}-\widetilde{y}=x-y + \delta (\nu(x)-\nu(y))$, and
  $$|\widetilde{x}-\widetilde{y}|^2=|x-y|^2(1+2\delta\frac{\langle x-y, \nu(x)- \nu(y)\rangle}{|x-y|^2} +\delta^2 \frac{|\nu(x)- \nu(y)|^2}{|x-y|^2}).$$
  Denote by
  $$E(x,y):=\frac{\langle x-y, \nu(x)- \nu(y)\rangle}{|x-y|^2}, G(x,y):=\frac{|\nu(x)- \nu(y)|^2}{|x-y|^2}.$$
Since $\partial \Omega$ is of class ${\mathcal {C}}^2$, we can see that there exists a constant $C$ depending only on $\partial \Omega$ such that
\begin{equation}\label{rm3}
|E(x,y)|+|G(x,y)|^{\frac{1}{2}}\leq C,~{\textrm{for all}}~ x,y\in\partial \Omega,
\end{equation}
and hence
$$|\widetilde{x}-\widetilde{y}|=|x-y|\sqrt{1+2\delta E(x,y)+\delta^2 G(x,y)}=|x-y|\sum_{n=0}^\infty\delta^nL_n(x,y),$$
where the series converges absolutely and uniformly. In particular, we can see that
$$L_0(x,y)=1, L_1(x,y)=E(x,y).$$

\subsection{The potential theory for Stokes resolvent system}

We start to review some basic facts in the theory of layer potentials. We consider here $\lambda \in {\mathbb{C}} \setminus \{z \in {\mathbb{C}}: \mbox{Re} z \leq 0, \mbox{Im}z=0\}$. The fundamental tensors $(\Gamma,F)$ of the Stokes resolvent system~(\ref{1}) can be obtained by the Fourier transform method in the following forms (see [3] for $d=2$ and [4],~[8] for $d=3$):\\
In two dimensions, we have
  \begin{equation}\label{2}
\left \{
\begin{array}{ll}
\Gamma_{ij}(x)&= -\frac{1}{2\pi}\left\{\delta_{ij}e_1(\sqrt{\lambda} |x|)+\frac{x_ix_j}{|x|^2}e_2(\sqrt{\lambda} |x|)\right\},\\
F_i(x)&=-\frac{x_i}{2 \pi |x|^2},
\end{array} \right.
\end{equation}
for  $\forall x \in {\mathbb{R}}^2,~ x \neq 0$ with
$$~~~~~~~~~~~
e_1(\kappa)=K_0(\kappa)+\kappa^{-1}K_1(\kappa)-\kappa^{-2}$$
$$~~~~~~~~~~~~~e_2(\kappa)=-K_0(\kappa)-2\kappa^{-1}K_1(\kappa)+2\kappa^{-2},$$
%\end{equation}
where  $K_n~(n\in \mathbb{N}_0)$  denotes the modified Bessel function of order $n$, and $\delta_{ij}$ is the Kronecker' symbol.            \\

In three dimensions, we have
 \begin{equation}\label{2}
\left \{
\begin{array}{ll}
\Gamma_{ij}(x)&=-\frac{1}{4\pi}\{\delta_{ij}e_1(-\sqrt{\lambda}|x|)+\frac{x_ix_j}{|x|^3}e_2(-\sqrt{\lambda}|x|)\}, \\
F_i(x)&=-\frac{1}{4\pi}\frac{x_i}{|x|^3},
\end{array} \right.
\end{equation}

%We get the explicit form  as follows:

for $\forall x \in {\mathbb{R}}^3, x \neq 0$ with
$$e_1(\epsilon)=\sum^\infty_{n=0}\frac{(n+1)^2}{(n+2)!}\epsilon^n=exp(\epsilon)(1-\epsilon^{-1}+\epsilon^{-2})-\epsilon^{-2}$$
$$e_2(\epsilon)=\sum^\infty_{n=0}\frac{1-n^2}{(n+2)!}\epsilon^n=exp(\epsilon)(-1+3\epsilon^{-1}-3\epsilon^{-2})+3\epsilon^{-2}.$$
%--- CONVERGENCE ------------------------------------------------------------

%\subsection 2.2. Layer potential
By using the notations ${\hat{x}}=x-y=(\hat x_1,\cdots,\hat x_d)$ and $r=|{\hat{x}}|$, we introduce the stress tensor ${\mathcal{S}}$  associated to the fundamental tensors $(\Gamma, F)$ and having the following components (see in [8]):
\begin{equation}\label{stresstensor}
S_{ijk}(x,y):=-F_j(\hat x)\delta_{ik} + \frac{\partial \Gamma_{ij}(\hat x)}{\partial \hat x_k}+\frac{\partial \Gamma_{kj}(\hat x)}{\partial \hat x_i},~~i,j,k=1,\cdots,d.
\end{equation}
Note that we used the Einstein convention for the summation notation omitting the summation sign for the indices appearing twice. We will continue using this convention throughout this paper. Taking into account the relation~(\ref{2}) and~(\ref{stresstensor})  we obtain the following explicit forms (see in [8]) as follows. \\
In two dimensions, for $\forall x,y \in {\mathbb{R}}^2,~ x \neq y$ we have:
\begin{equation}\label{f2d}
 S_{ijk}(x,y)=-\frac{1}{2 \pi}\left \{\delta_{ik} \frac{ \hat x_j}{r^2}d_1(\sqrt{\lambda}r)+(\delta_{kj}\frac{\hat x_i}{|x|^2}-\delta_{ij}\frac{\hat x_k}{r^2}) d_2(\sqrt{\lambda}r) \right\}
\end{equation}
$$\hspace{4cm}-\frac{1}{2 \pi}\left \{\frac{\hat x_i \hat x_j\hat x_k}{r^4}(2d_1(\sqrt{\lambda}r)+2d_2(\sqrt{\lambda}r)-2)\right \},$$
with
 $$d_1(\kappa)=2K_2(\kappa)+1-4\kappa^{-2}, d_2(\kappa)=2K_2(\kappa)+\kappa K_1(\kappa)-4\kappa^{-2}.$$
 In three dimensions, for $\forall x,y \in {\mathbb{R}}^3, x \neq y$, we have:
 \begin{equation}\label{f3d}
   S_{ijk}(x,y)= -\frac{1}{4\pi}\left \{\delta_{ik} \frac{\hat x_j}{r^3}d_1(-\sqrt \lambda r) - (\delta_{ij}\frac{\hat x_k}{r^3}+  \delta_{kj}\frac{\hat x_i}{r^3})d_2(-\sqrt \lambda r) \right \} 
\end{equation}
$$\hspace{4cm} -\frac{1}{4\pi} \left \{ \frac{\hat x_i\hat x_j\hat x_k}{r^5}(3-3d_1(-\sqrt \lambda r)+2d_2(-\sqrt \lambda r)) \right \},
%\end{multline}
$$
%\end{equation}
with
 $$d_1(\epsilon)=\sum^\infty_{n=2}\frac{2(n^2-1)}{(n+2)!}\epsilon ^n= exp(\epsilon)(2-6 \epsilon ^{-1}+6\epsilon ^{-2})-6\epsilon ^{-2} +1,$$
 $$d_2(\epsilon)=\sum^\infty_{n=2}\frac{n(n^2-1)}{(n+2)!}\epsilon ^n= exp(\epsilon)(\epsilon-3+6\epsilon ^{-1}-6\epsilon ^{-2})+6\epsilon ^{-2}.$$
 
The pressure tensor $\Lambda$ associated to the stress tensor ${\mathcal{S}}$ has the following components (see in [3] for $d=2$ and [4], [5] for $d=3$):
 \begin{equation}
 \Lambda_{ik}(x,y)=\left\{
 \begin{array}{ll}
 -\frac{1}{2\pi} (\delta_{ik}\lambda \ln r-4\frac{\delta_{ik}}{r^2}+8\frac{\hat x_i \hat x_k}{r^4}) & \textrm{for $d=2$} \\
 - \frac{1}{4\pi}(\frac{\delta_{ik}}{r^3}(\lambda r^2 -2)+ \frac{\hat x_i \hat x_k}{r^5})& \textrm{for $d=3$},
 \end{array}\right.
 \end{equation}
for $\forall x,y \in {\mathbb{R}}^d,~ x \neq y$.\\
Next we consider the single- and double-layer potentials associcated with stress and pressure tensors. Let $\partial D$ be a closed surface of class $C^2$ of a bounded  domain $D$ and $\phi=(\phi_1,\cdots,\phi_d)$ a vectorial continuous function on $\partial D$. For $x\in {\mathbb{R}}^d\backslash \partial D$, we define the single-layer potential ${\mathcal{V}}_{D}\phi$ which has components  as follows:
\begin{equation}\label{potential}
{\mathcal{V}}_{D}^{i}\phi (x):=  \int_{\partial D}  \Gamma_{ij}(x-y)\phi_j(y) d\sigma(y),~i=1,\cdots,d;
\end{equation}
and the double-layer potential ${\mathcal{W}}_ D\phi$ has components as follows
\begin{equation}\label{17}
{\mathcal{W}}_ D ^i \phi (x):=  \int_{\partial D} -S_{ijk}(x,y)\nu_k(y)\phi_j(y)d\sigma(y),~i=1,\cdots,d;
\end{equation}
where $\nu(y)$ is the outward unit normal vector to $\partial D$ at the point $y$. \\

Additionally, we consider the functions ${\mathcal{Q}}_D\phi$ and $\Pi_D \phi$ as follows
\begin{equation}\label{18}
 \begin{array}{ll}
{\mathcal{Q}}_D \phi (x):= & \int_{\partial D} F_i(x-y)\phi_i(y) d\sigma(y), \\
\Pi_D \phi (x):= & \int_{\partial D} \Lambda_{ik}(x-y)\nu_k(y)\phi_i(y) d\sigma(y)
\end{array}
\end{equation}
for $x \in {\mathbb{R}}^d\backslash \partial D$. \\

The functions  $({\mathcal{V}}_D \phi, {\mathcal{Q}}_D \phi)$ and $({\mathcal{W}}_D \phi, \Pi_D  \phi)$ are smooth functions in each of the domains ${\mathbb{R}}^d\backslash {\bar{D}_0}$ and $D_0$, respectively, where $D_0$ is the inner domain with the boundary $\partial D$. All these functions   satisfy the following equations of Stokes resolvent  problem: \\
\begin{equation}
\begin{array}{ll}
-\Delta {\mathcal{V}}_{D} \phi (x)+\nabla {\mathcal{Q}}_D \phi(x)=0, ~~\nabla\cdot{\mathcal{V}}_{D} \phi (x)=0,\\
-\Delta {\mathcal{W}}_{D} \phi (x)+\nabla \Pi_D \phi (x)=0, ~~\nabla\cdot{\Pi}_{D} \phi (x)=0,
\end{array}
\end{equation}
for $x\in {\mathbb{R}}^3\backslash \partial D$. \\

For our prupose, we also need to introcduce the principal value of the double-layer potential in a point $x_0$ of $\partial D$ defined by the following formula
\begin{equation}\label{k}
{\mathcal{K}}^i_D \phi (x_0):= p.v.\int_{\partial D}-S_{ijk}(x_0,y)\nu_k(y) \phi_j(y) d\sigma(y)
\end{equation}

For $\phi \in (C(\partial D))^d$, the following trace relation for ${\mathcal{W}}_D$ holds (see in [6]):
\begin{equation}
{\mathcal{W}}_D \phi \mid_\pm = (\mp \frac{1}{2}I+{\mathcal{K}}_D) \phi  \hspace{5mm} a.e.~ on~ \partial \Omega,
\end{equation}
where ${\mathcal{W}}_D \phi \mid_- $ and ${\mathcal{W}}_D \phi \mid_+ $ denote the limits from inside $D$ and outside $D$.\\

From the previous section, the solution for the interior  Dirichlet problem can be presented by the pure double-layer potential (see in [4]), 
then taking $D=\Omega$ and $D=\Omega_\delta$ in the fomulae~(\ref{potential}),~(\ref{17}),~(\ref{18}) and~(\ref{k}), we get the boundary integral representation of the solutions for the initial Stokes problem~(\ref{1}) and the perturbation problem~(\ref{new})  
are $({\mathcal{W}}_\Omega \phi, \Pi_\Omega \phi)$  and $({\mathcal{W}}_{\Omega_\delta}{\widetilde{\phi}}, \Pi_{\Omega_\delta}{\widetilde{\phi}})$, respectively.
% (i.e., the value of improper but convergent double layer-potential when $x\in \partial \Omega$).\\

%Similarly, for the problem~(\ref{new}) we define  the  double-layer potential on $\partial
%\Omega_\delta$ for  $x\in {\mathbb{R}}^d\backslash \partial \Omega_\delta$ as follows
%\begin{equation}
%{\mathcal{W}}_{\Omega_\delta}[\widetilde{\phi}]_i(x):=  \int_{\partial \Omega_\delta}\frac{\partial \Gamma_{ij}}{\partial \nu(y)}(x-y)\widetilde{\phi}_j(y) d\sigma(y)+  \int_{\partial \Omega_\delta} F_i(x-y)\nu_j(y) \widetilde{\phi}_j(y) d\sigma(y).
%\end{equation}
%and  the principal value of the double-layer potential at a point $x$ on $\partial \Omega_\delta$
%\begin{equation}
%{\mathcal{K}}_{\Omega_\delta}[\widetilde{\phi}]_i(x)~~~~:= ~~~~ p.v.  \int_{\partial \Omega_\delta}\frac{\partial \Gamma_{ij}}{\partial \nu(y)}(x-y)\widetilde{\phi}_j(y) d\sigma(y)+ p.v. \int_{\partial \Omega_\delta} F_i(x-y)\nu_j(y) \widetilde{\phi}_j(y) d\sigma(y).
%\end{equation}

%For $\widetilde{\phi} \in C^d(\partial \Omega_\delta)$, the following trace relation holds
%\begin{equation}
%{\mathcal{K}}_{\Omega_\delta}[\widetilde{\phi}]_i(x)~~~~~:=~~~~~ p.v. \int_{\partial \Omega_\delta}\frac{\partial \Gamma_{ij}}{\partial \nu(y)}(x-y)\widetilde{\phi}_j(y) d\sigma(y)~+ ~p.v. \int_{\partial \Omega_\delta} F_i(x-y)\nu_j(y) \widetilde{\phi}_j(y) d\sigma(y).
%\end{equation}
Additionally, from~(\ref{f2d}) and~(\ref{f3d}), we conclude that the components $S_{ijk}(x,y)$ $            \nu_k(y)$ of ${\mathcal{S}}(x,y)\nu(y)$ can be written in the following manner (see in [3],[4]). \\

In two dimensions, for $\forall x,y \in {\mathbb{R}}^2, x \neq y$ , we have
\begin{equation}\label{Dij2}
{\mathcal{S}}_{ijk}(x,y)\nu_k(y)=-\frac{1}{2 \pi}\left \{ \frac{r_i\nu_j}{|r|^2}d_1(\sqrt{\lambda}|r|)+(\frac{\nu_ir_j}{|r|^2}-\delta_{ij}\frac{r\cdot \nu}{|r|^2}) d_2(\sqrt{\lambda}|r|)\right\},
\end{equation}
$$\hspace{4cm}-\frac{1}{2 \pi}\left \{ \frac{r_ir_jr\cdot \nu}{|r|^4}(2d_1(\sqrt{\lambda}|r|)+2d_2(\sqrt{\lambda}|r|)-2)\right\}$$
with
 $$d_1(\kappa)=2K_2(\kappa)+1-4\kappa^{-2}, d_2(\kappa)=2K_2(\kappa)+\kappa K_1(\kappa)-4\kappa^{-2}.$$

And in three dimensions, for $\forall x,y \in {\mathbb{R}}^3, x \neq y$, we have 
\begin{equation}
%\begin{multline}\label{Dij3}
{\mathcal{S}}_{ijk}(x,y)\nu_k(y)=-\frac{1}{4\pi}\left \{ \frac{r_i\nu_j}{|r|^3}d_1(-\sqrt \lambda |r|) - (\frac{\nu_ir_j}{|r|^3}+  \delta_{ij}\frac{r\cdot \nu}{|r|^3})d_2(-\sqrt \lambda |r|) \right \} 
\end{equation}
$$\hspace{4cm}-\frac{1}{4\pi} \left \{ \frac{r_ir_jr \cdot \nu}{|r|^5}(3-3d_1(-\sqrt \lambda |r|)+2d_2(-\sqrt \lambda |r|)) \right \},
%\end{multline}
$$
%\end{equation}
with
 $$d_1(\epsilon)=\sum^\infty_{n=2}\frac{2(n^2-1)}{(n+2)!}\epsilon ^n= exp(\epsilon)(2-6 \epsilon ^{-1}+6\epsilon ^{-2})-6\epsilon ^{-2} +1,$$
 $$d_2(\epsilon)=\sum^\infty_{n=2}\frac{n(n^2-1)}{(n+2)!}\epsilon ^n= exp(\epsilon)(\epsilon-3+6\epsilon ^{-1}-6\epsilon ^{-2})+6\epsilon ^{-2}.$$
 
 For further purpose, we denote the kernel $-S_{ijk}(x,y)\nu_k(y)$ of double-layer potential ${\mathcal{K}}_D \phi$ by ${\mathcal{D}}_{ij}(x,y)$. We note that ${\mathcal{D}}_{ij}(x,y)$ behaves like ${\mathcal{O}}(r^{-d+1})$ as $r \rightarrow 0$ , which implies this kernel is weakly singular, so the integral operator is bounded on $({\mathcal{C}}(\partial D))^d$ (see in  [11, p. 243] or in [3], [4], [6]).

\section{Continuity of the solution with respect to the perturbation}

In this section we will show that the solution $u_\delta$ of the perturbation problem (\ref{new}) is continuous with respect to $\delta$ as $\delta$ tends to $0$. And for the guarantee of the solvability and uniqueness of the solutions for (\ref{1}) and~(\ref{new}), we assume the following compatibility conditions
$$\int_{\partial \Omega} g\cdot n d\sigma= 0,~\mbox{and} \int_{\partial \Omega_\delta} g_\delta \cdot n_\delta d\sigma_\delta= 0.$$
%where $\mathcal{K}$ is given by formula~(\ref{k}) and $\mathcal{K_\delta}$ is that corresponding operator for the problem~(\ref{new}),  given by the following formula
From the previous section, the  solution to the Stokes system~(\ref{1}) can be represented by the pure double-layer potential on $\partial \Omega$
\begin{equation}
 u^i(x)={\mathcal{W}}_ \Omega ^i \phi (x) =  \int_{\partial \Omega} -S_{ijk}(x,y)\nu_k(y)\phi_i(y)d\sigma(y),~x\in \Omega
\end{equation}
with the  density  vectorial function $\phi$ satisfies
\begin{equation}\label{newslt}
 {\mathcal{W}}_\Omega\phi\mid_- = ( \frac{1}{2}I+{\mathcal{K}}_\Omega) \phi \hspace{5mm} a.e.~ on~ \partial \Omega.\\
\end{equation}

Likewise, the solution to the  Stokes perturbation problem~(\ref{new}) can be represented  by the pure double-layer potential on $\partial{\Omega_\delta}$
\begin{equation}
u^i_\delta(x)= {\mathcal{W}}^i_{\Omega_\delta} \widetilde{\phi}(x)=  \int_{\partial {\Omega_\delta}} -S_{ijk}(x,y)\nu_k(y)\widetilde{\phi}_i(y)d\sigma(y),~x\in \Omega_\delta,
 \end{equation}
with the  density  vectorial function $\widetilde{\phi}$ satisfies
\begin{equation}\label{newslt2}
 {\mathcal{W}}_{\Omega_\delta}\widetilde{\phi}\mid_- = ( \frac{1}{2}I+{\mathcal{K}}_{\Omega_\delta}){\widetilde{\phi}} \hspace{5mm} a.e.~ on~ \partial {\Omega_\delta}.
\end{equation}
From the boundary condition we have $ {\mathcal{W}}_\Omega\phi\mid_-=g$ on $\partial \Omega$ and ${\mathcal{W}}_{\Omega_\delta} \widetilde{\phi}\mid_-=g_\delta$ on $\partial\Omega_\delta$.
Consider $u_{\delta}-u$ in any closed domain $\Omega^\circ  \subset \Omega \cap \Omega_{\delta}$, from the above formulae
\begin{equation}
u_\delta(x) - u(x)= {\mathcal{W}}_{\Omega_\delta} {\widetilde{\phi}} (x)-{\mathcal{W}}_\Omega \phi (x),~ x \in \Omega^\circ. % \setminus \partial \Omega
\end{equation}
%We now investigate the asymptotic behavior of $ \mathcal{D}_\delta[\widetilde{\phi}]$ as $\delta \rightarrow 0$.\\
%We now first investigate the continuity of the solution $u_\delta$ due to $\delta$ when $\delta$ tends to $0$.

Let $\psi _\delta(x)=x+\delta\rho(x)\nu (x)$  and $\psi _\delta(x)=x+\delta\nu (x)$ be the diffeomorphisms from $\partial \Omega$ to $\partial \Omega_\delta$ in the case $d=2$ and in the case $d=3$ respectively. The following estimates hold.

\begin{lemma}
There exists a constant $C$ depending only on  $\Omega$ such that for any function $\widetilde{\phi} \in (C(\partial\Omega_\delta))^d$, we have
$$\|({\mathcal{K}}_{\Omega_\delta} \widetilde{\phi})\circ \psi_\delta-{{\mathcal {K}}_\Omega}(\widetilde{\phi}\circ\psi_\delta)\|_{({\mathcal {C}}(\partial\Omega))^d}\leq C\delta\|\widetilde{\phi}\|_{({\mathcal {C}}(\partial\Omega_\delta))^d}.$$
\end{lemma}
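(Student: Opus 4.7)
The plan is to pull both $(\mathcal{K}_{\Omega_\delta}\widetilde{\phi})\circ\psi_\delta$ and $\mathcal{K}_\Omega(\widetilde{\phi}\circ\psi_\delta)$ back to integrals over the fixed boundary $\partial\Omega$, observe that the $\delta^0$ contributions coincide exactly with $\mathcal{K}_\Omega(\widetilde{\phi}\circ\psi_\delta)$, and verify that the remainder is $\delta$ times a weakly singular integral operator bounded on $(\mathcal{C}(\partial\Omega))^d$ with bound independent of $\delta$.

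For $x\in\partial\Omega$, the change of variables $\widetilde{y}=\psi_\delta(y)$ in the definition of $\mathcal{K}_{\Omega_\delta}\widetilde{\phi}$, together with the expansions (\ref{nv}) and (\ref{length}) in 2D (resp.\ (\ref{nv3}) and (\ref{surface}) in 3D), gives
\begin{equation*}
\bigl((\mathcal{K}_{\Omega_\delta}\widetilde{\phi})\circ\psi_\delta\bigr)_i(x)=\mathrm{p.v.}\!\int_{\partial\Omega}\!-S_{ijk}\bigl(\psi_\delta(x)-\psi_\delta(y)\bigr)\,\widetilde{\nu}_k(\psi_\delta(y))\,(\widetilde{\phi}\circ\psi_\delta)_j(y)\,J_\delta(y)\,d\sigma(y),
\end{equation*}
where $\widetilde{\nu}(\psi_\delta(y))=\nu(y)+\delta N_\delta(y)$ and $J_\delta(y)=1+\delta \widetilde{J}_\delta(y)$ with $N_\delta,\widetilde{J}_\delta$ uniformly bounded on $\partial\Omega$ for $\delta\in(0,\delta_0]$ (and $N_\delta\equiv 0$ in 3D by (\ref{nv3})). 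Since $\mathcal{K}_\Omega(\widetilde{\phi}\circ\psi_\delta)(x)_i$ is the same integral with $\psi_\delta$ replaced by the identity inside $S_{ijk}$ and with $\widetilde{\nu}$, $J_\delta$ replaced by $\nu$, $1$, subtracting and distributing expresses the difference as a sum of two types of integrals against $(\widetilde{\phi}\circ\psi_\delta)_j$: a ``frozen density'' term $\int[S_{ijk}(x-y)-S_{ijk}(\psi_\delta(x)-\psi_\delta(y))]\nu_k\,d\sigma$, plus explicitly $\delta$-linear pieces bearing $N_\delta$ and/or $\widetilde{J}_\delta$.

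The crux is a uniform weak-singularity bound on the frozen density kernel. Writing $\psi_\delta(x)-\psi_\delta(y)=(x-y)+\delta w(x,y)$ with $w(x,y)=\rho(x)\nu(x)-\rho(y)\nu(y)$ in 2D (resp.\ $\nu(x)-\nu(y)$ in 3D), the estimates (\ref{rm2}) and (\ref{rm3}) combined with the $\mathcal{C}^2$-regularity of $\partial\Omega$ and the $\mathcal{C}^1$-regularity of $\rho$ give $|w(x,y)|\le C|x-y|$; in particular $|\psi_\delta(x)-\psi_\delta(y)|$ is comparable to $|x-y|$ uniformly for $\delta\in(0,\delta_0]$. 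Since each $S_{ijk}(\widehat{x})$, as read off from (\ref{f2d})--(\ref{f3d}), is a smooth function of the direction $\widehat{x}/|\widehat{x}|$ with analytic factors in $\sqrt{\lambda}|\widehat{x}|$ divided by $|\widehat{x}|^{d-1}$, the mean value theorem along the straight segment from $x-y$ to $\psi_\delta(x)-\psi_\delta(y)$ yields
\begin{equation*}
\bigl|S_{ijk}(\psi_\delta(x)-\psi_\delta(y))-S_{ijk}(x-y)\bigr|\le C\,\delta\,|x-y|^{-(d-1)},
\end{equation*}
uniformly in $\delta\in(0,\delta_0]$, and the $\delta$-linear pieces are already of the form $\delta\cdot O(|x-y|^{-(d-1)})$ by the same comparability.

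Consequently, the difference $(\mathcal{K}_{\Omega_\delta}\widetilde{\phi})\circ\psi_\delta-\mathcal{K}_\Omega(\widetilde{\phi}\circ\psi_\delta)$ equals $\delta$ times a principal-value integral against a kernel weakly singular of order $-(d-1)$ with $\delta$-independent bound. Since $\partial\Omega\in\mathcal{C}^2$, such kernels define bounded operators on $(\mathcal{C}(\partial\Omega))^d$ (cf.\ the remark just before Section 3), and $\|\widetilde{\phi}\circ\psi_\delta\|_\infty\le\|\widetilde{\phi}\|_\infty$, so the desired inequality follows. The main obstacle is exactly the uniform weak-singularity estimate: one must verify that the displacement $\psi_\delta(x)-\psi_\delta(y)-(x-y)=\delta w(x,y)$ is $O(\delta|x-y|)$ rather than merely $O(\delta)$, so that when applying the mean value theorem to $S_{ijk}$, the extra factor of $|x-y|$ cancels one power of the $|x-y|^{-d}$ singularity of $\nabla S_{ijk}$ and leaves $\delta|x-y|^{-(d-1)}$ — this is what preserves weak singularity and permits uniform operator-norm control.
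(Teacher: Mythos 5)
Your proposal is correct and follows essentially the same route as the paper: pull both operators back to $\partial\Omega$ via $\psi_\delta$, apply the mean value theorem to $S_{ijk}$ along the displacement $\delta w(x,y)$, use the bounds \eqref{rm2}--\eqref{rm3} to see that $|w(x,y)|\le C|x-y|$ so the resulting kernel is still weakly singular of order $-(d-1)$, and invoke boundedness of weakly singular operators on $(\mathcal{C}(\partial\Omega))^d$. If anything, you make explicit the key cancellation (the factor $|x-y|$ against one power of the $|x-y|^{-d}$ singularity of $\nabla S_{ijk}$) that the paper's proof only asserts implicitly.
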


\begin{proof}
Fix $x\in \partial \Omega$, we then have
$$({\mathcal{K}}^i_{\Omega_\delta}\widetilde{\phi})\circ \psi_\delta(x)-{\mathcal {K}}^i_\Omega (\widetilde{\phi}\circ\psi_\delta)(x) \hspace{6cm}$$
$$=p.v. \int_{\partial{\Omega}}[-S_{ijk}(\widetilde{x},\widetilde{y})\widetilde{\nu}_k(\widetilde{y})j_\delta(y)+S_{ijk}(x,y)\nu_k(y)] \widetilde{\phi}_j\circ\psi_\delta d\sigma(y)$$
here $j_\delta$ denotes the Jacobian of $\psi_\delta$, and has the approximation  $j_\delta(y)=1+{\mathcal{O}}(\delta)$.\\

 We can express $S_{ijk}{(\widetilde{x}},\widetilde{y})=S_{ijk}{(\widetilde{x}-\widetilde{y})}$, then due to the mean value theorem  we have the following expression.\\
 In the case $d=2$, we have
%\begin{equation}
\[ -S_{ijk}{(\widetilde{x},\widetilde{y})}=-S_{ijk}(x,y)- \delta(\rho(x)\nu(x)-\rho(y)\nu(y))\nabla S_{ijk}(x-y+\theta(\rho(x)\nu(x)-\rho(y)\nu(y))),\]
with $0<\theta<\delta$.\\
%\end{equation}
In the case $d=3$, we have
%\begin{equation}
 \[-S_{ijk}{(\widetilde{x},\widetilde{y})}=-S_{ijk}(x,y)-\delta (\nu(x)-\nu(y))\nabla S_{ijk}(x-y+\theta(\nu(x)-\nu(y))),~0<\theta<\delta.\]
%\end{equation}
Then combining with~(\ref{nv}) and~(\ref{nv3}) we get\\
%\begin{equation}\label{ex}
\[-S_{ijk}{(\widetilde{x},\widetilde{y})}\widetilde{\nu}_k(\widetilde{y})=-S_{ijk}(x,y)\nu_k(y)+\delta T_{ij}(x,y)+{\mathcal{O}}(\delta),\]
%\end{equation}
with 
$$T_{ij}(x,y)=-(\rho(x)\nu(x)-\rho(y)\nu(y))\nabla S_{ijk}(x-y+\theta(\rho(x)\nu(x)-\rho(y)\nu(y)) $$
$$\hspace{6cm}- S_{ijk}(x-y)\nu_k^1(y) ~{\textrm{in the case}}~ d=2;$$
$$T_{ij}(x,y)=-(\nu(x)-\nu(y))\nabla S_{ijk}(x-y+\theta(\nu(x)-\nu(y)))~ {\textrm{in the case}}~ d=3.$$
{Denote by} 
$$T^i_\Omega (\widetilde{\phi} \circ \psi_\delta)= p.v. \int_{\partial \Omega} T_{ij}(x,y)(\widetilde{\phi}_j\circ\psi_\delta)(x)d\sigma(y),$$
 we have the equality
\begin{equation}\label{ctn}
({\mathcal{K}}^i_{\Omega_\delta} \widetilde{\phi})\circ \psi_\delta(x)={\mathcal {K}}^i_\Omega (\widetilde{\phi}\circ\psi_\delta)(x)+\delta T^i_\Omega (\widetilde{\phi} \circ \psi_\delta)+{\mathcal{O}}(\delta^2).
\end{equation}
By using the remarks~\ref{rm2} and~\ref{rm3} as well as  the decay behavior of $S_{ijk}(x,y)$ as $|x-y| \rightarrow 0$ (as mentioned in Section 2.2), we obtain that
$T_{ij}(x,y)={\mathcal{O}}(|x-y|^{-d+1})$ as $|x-y| \rightarrow 0$, 
which implies that $ T_{ij}(x,y)$ is also weakly singular. So the integral $T^i_\Omega (\widetilde{\phi}\circ \psi_\delta)$ is bounded on $({\mathcal{C}}(\partial \Omega))^d$, and we have that
$$\|T_\Omega (\widetilde{\phi}\circ \psi_\delta)\|_{({\mathcal {C}}(\partial\Omega))^d}\leq C\|\widetilde{\phi}\circ \psi_\delta\|_{({\mathcal {C}}(\partial\Omega))^d},$$
this completes the proof. %\hfill\qedbox
\end{proof}
%\skipaline\noindentt
\begin{lemma}
There exists a constant C depending only on $\Omega$ and $g$  such that
$$\|\widetilde{\phi} \circ \psi_\delta-\phi\|_{({\mathcal {C}}(\partial \Omega))^d}\leq \delta C(g, \Omega).$$
\end{lemma}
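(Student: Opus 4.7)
My plan is to obtain an integral equation on the fixed boundary $\partial\Omega$ satisfied by the difference $\Phi := \widetilde{\phi}\circ\psi_\delta - \phi$, and then to invert $\frac{1}{2}I + \mathcal{K}_\Omega$ to convert a right-hand side of size $O(\delta)$ into the desired estimate. Starting from the two defining identities
\[(\tfrac{1}{2}I + \mathcal{K}_\Omega)\phi = g \quad \text{on } \partial\Omega, \qquad (\tfrac{1}{2}I + \mathcal{K}_{\Omega_\delta})\widetilde{\phi} = g_\delta \quad \text{on } \partial\Omega_\delta,\]
I would compose the second with the diffeomorphism $\psi_\delta$ to transport it to $\partial\Omega$, and then invoke the previous lemma to replace $(\mathcal{K}_{\Omega_\delta}\widetilde{\phi})\circ \psi_\delta$ by $\mathcal{K}_\Omega(\widetilde{\phi}\circ \psi_\delta) + R_\delta\widetilde{\phi}$, where the remainder satisfies $\|R_\delta\widetilde{\phi}\|_{(\mathcal{C}(\partial\Omega))^d} \leq C\delta\|\widetilde{\phi}\|_{(\mathcal{C}(\partial\Omega_\delta))^d}$. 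Subtracting the $\phi$-equation then yields
\[(\tfrac{1}{2}I + \mathcal{K}_\Omega)\Phi = \bigl(g_\delta\circ\psi_\delta - g\bigr) - R_\delta\widetilde{\phi}.\]

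Next I would estimate the right-hand side. The first term is controlled by using the assumed analyticity of $g$ in a neighborhood of $\partial\Omega$: a first-order Taylor expansion along the normal direction $\delta\rho(x)\nu(x)$ (resp.\ $\delta\nu(x)$ in dimension three) gives $\|g_\delta\circ\psi_\delta - g\|_{(\mathcal{C}(\partial\Omega))^d}\leq C\delta\|g\|_{C^1(\mathcal{N})}$ on a tubular neighborhood $\mathcal{N}$ of $\partial\Omega$. The second term is directly bounded by $C\delta\|\widetilde{\phi}\|_{(\mathcal{C}(\partial\Omega_\delta))^d}$ via Lemma~3.1. I would then apply the bounded inverse of $\frac{1}{2}I+\mathcal{K}_\Omega$ (available on the subspace cut out by the compatibility condition, by Fredholm theory; cf.\ the references listed in Section~2.2) to obtain
\[\|\Phi\|_{(\mathcal{C}(\partial\Omega))^d} \leq C\delta\bigl(\|g\|_{C^1} + \|\widetilde{\phi}\|_{(\mathcal{C}(\partial\Omega_\delta))^d}\bigr).\]

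The main obstacle is the uniform-in-$\delta$ control of $\|\widetilde{\phi}\|_{(\mathcal{C}(\partial\Omega_\delta))^d}$: a priori $\widetilde{\phi}$ lives on a moving boundary, and one needs that $(\frac{1}{2}I+\mathcal{K}_{\Omega_\delta})^{-1}$ is uniformly bounded as $\delta\to 0$. The clean way to close the loop is to avoid treating $\widetilde{\phi}$ abstractly and instead use the triangle inequality $\|\widetilde{\phi}\|_{(\mathcal{C}(\partial\Omega_\delta))^d} = \|\widetilde{\phi}\circ\psi_\delta\|_{(\mathcal{C}(\partial\Omega))^d} \leq \|\Phi\|_{(\mathcal{C}(\partial\Omega))^d} + \|\phi\|_{(\mathcal{C}(\partial\Omega))^d}$, and substitute this into the previous bound. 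This gives
\[(1-C\delta)\,\|\Phi\|_{(\mathcal{C}(\partial\Omega))^d} \leq C\delta\bigl(\|g\|_{C^1} + \|\phi\|_{(\mathcal{C}(\partial\Omega))^d}\bigr),\]
and absorbing the $C\delta\|\Phi\|$ term into the left-hand side for $\delta$ small, together with $\|\phi\|_{(\mathcal{C}(\partial\Omega))^d}\leq C\|g\|$ from the invertibility of $\frac{1}{2}I+\mathcal{K}_\Omega$, yields the claimed $\|\widetilde{\phi}\circ\psi_\delta - \phi\|_{(\mathcal{C}(\partial\Omega))^d}\leq \delta\, C(g,\Omega)$, with $C(g,\Omega)$ depending only on $\Omega$ and on appropriate norms of $g$ in a neighborhood of $\partial\Omega$.
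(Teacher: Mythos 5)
Your proposal follows essentially the same route as the paper's proof: both apply the bounded inverse of $\tfrac{1}{2}I+\mathcal{K}_\Omega$ to the transported integral equation, use Lemma~3.1 to replace $(\mathcal{K}_{\Omega_\delta}\widetilde{\phi})\circ\psi_\delta$ by $\mathcal{K}_\Omega(\widetilde{\phi}\circ\psi_\delta)$ up to an $O(\delta)\|\widetilde{\phi}\|$ remainder, and bound $\|g_\delta\circ\psi_\delta-g\|$ by $C\delta$ via the analyticity of $g$ near $\partial\Omega$. The one point where you diverge is the control of $\|\widetilde{\phi}\|_{(\mathcal{C}(\partial\Omega_\delta))^d}$: the paper simply asserts $\|\widetilde{\phi}\|\leq C\|g_\delta\|$ from the invertibility of $\tfrac{1}{2}I+\mathcal{K}_{\Omega_\delta}$, which tacitly requires that inverse to be bounded uniformly in $\delta$ (not established there), whereas you bypass this by writing $\|\widetilde{\phi}\circ\psi_\delta\|\leq\|\Phi\|+\|\phi\|$ and absorbing the resulting $C\delta\|\Phi\|$ term into the left-hand side for $\delta$ small. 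Your variant is the more careful one on that step and costs nothing; otherwise the two arguments coincide.
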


\begin{proof}
Since the integral operator $\frac{1}{2}I+{\mathcal{K}}_\Omega$ is invertible on $({\mathcal {C}}(\partial \Omega))^d$, combining with the trace relation~(\ref{newslt}),~(\ref{newslt2}) and Lemma~1 we obtain:
$$\|\widetilde{\phi} \circ \psi_\delta-\phi\|_{({\mathcal {C}}(\partial \Omega))^d}\leq C\|(\frac{1}{2}I+{\mathcal{K}}_\Omega)(\widetilde{\phi} \circ \psi_\delta-\phi)\|_{({\mathcal {C}}(\partial \Omega))^d} \hspace{3cm}$$
$$\hspace{1cm}\leq C\|((\frac{1}{2}I+{\mathcal{K}}_{\Omega_\delta} \widetilde{\phi})\circ\psi_\delta-(\frac{1}{2}I+{\mathcal{K}}_\Omega\phi)\|_{({\mathcal {C}}(\partial \Omega))^d}$$
$$+C\|({\mathcal{K}}_{\Omega_\delta}\widetilde{\phi})\circ\psi_\delta-{\mathcal{K}}_\Omega (\widetilde{\phi}\circ\psi_\delta)\|_{({\mathcal {C}}(\partial \Omega))^d}$$
$$\hspace{1cm}\leq C \|g_\delta\circ \psi_\delta-g\|_{({\mathcal {C}}(\partial \Omega))^d} + C\delta\|\widetilde{\phi}\|_{(C(\partial\Omega_\delta))^d}.$$
%By mean value theorem we have $g_\delta\circ \psi_\delta(x)-g(x)=g_\delta(x+\delta \rho(x)\nu(x))-g(x)=\delta \rho(x)\nu(x)g'(x+\delta \rho(x) \nu(x) \theta)$, for some $\theta=\theta(x)\in [-1,1]$.\\
%Because we assume that $g$ is analytic in a neighborhood of $\partial \Omega$ so there exists a constant $C_1$ depends on $g$ such that $\|g'\|_C(V(\partial \Oemga))\leq \delta C_1(g)$, and we will get that $\|g_\delta\circ \psi_\delta-g\|_{C(\partial\Omega)}\leq \delta C$, where $C$ is a constant depends on $g$.
Due to the mean value theorem and the analyticity of $g$ in a neighborhood of $\partial \Omega$,  we can easily see that there exists a constant C depending only on $\Omega$ and $g$ such that $\|g_\delta\circ \psi-g\|_{({\mathcal {C}}(\partial \Omega))^d}\leq \delta C(g, \Omega) $.\\
\quad We also have that $\|\widetilde{\phi}\|_{(C(\partial\Omega_\delta))^d}\leq \|g_\delta\|_{({\mathcal {C}}(\partial \Omega))^d}$ because of the invertibility of $\frac{1}{2}+{\mathcal{K}}_{\Omega_\delta}$. \\
\quad Finally we get that $ \|\widetilde{\phi} \circ \psi_\delta-\phi\|_{({\mathcal {C}}(\partial \Omega))^d}\leq \delta C(g, \Omega).$ 
\end{proof}
\begin{theorem}
Let $\Omega^\circ$ is any closed subset of $\Omega \cap \Omega_\delta$, there exists a constant $C$ depending only  on $g$ and $\Omega$ such that $$\|u_\delta(x)-u(x)\|_{({\mathcal {C}}(\Omega^\circ))^d} \leq \delta C(g,\Omega).$$
\end{theorem}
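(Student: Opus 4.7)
The plan is to use the pure double-layer representations of $u$ and $u_\delta$, transport the integral over $\partial\Omega_\delta$ onto $\partial\Omega$ by means of the diffeomorphism $\psi_\delta$, and split the resulting difference into a ``kernel part'' and a ``density part''. The density part will be controlled by Lemma 2, and the kernel part by a mean-value-theorem expansion in the same spirit as the one carried out in the proof of Lemma 1, combined with the expansions~(\ref{nv}), (\ref{length}), (\ref{surface}), (\ref{nv3}) of $\widetilde{\nu}$ and $d\sigma_\delta$. The essential point making this simpler than Lemma 1 is that $x\in\Omega^\circ$ lies at positive distance $d_0$ from $\partial\Omega\cup\partial\Omega_\delta$, so the kernel $S_{ijk}(x,\cdot)\nu_k(\cdot)$ is smooth and uniformly bounded in $y$ rather than weakly singular, and no principal value is needed.

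Concretely, for $x\in\Omega^\circ$ I would change variables $\widetilde{y}=\psi_\delta(y)$ in $\mathcal{W}_{\Omega_\delta}\widetilde{\phi}(x)$, pulling out the Jacobian $j_\delta(y)=1+\mathcal{O}(\delta)$ provided by (\ref{length})/(\ref{surface}), and then add and subtract $-S_{ijk}(x,y)\nu_k(y)(\widetilde{\phi}_j\circ\psi_\delta)(y)$ in the integrand to obtain $u^i_\delta(x)-u^i(x)=I_1^i(x)+I_2^i(x)$ with
\begin{equation*}
I_1^i(x)=\int_{\partial\Omega}\bigl[-S_{ijk}(x,\widetilde{y})\widetilde{\nu}_k(\widetilde{y})j_\delta(y)+S_{ijk}(x,y)\nu_k(y)\bigr](\widetilde{\phi}_j\circ\psi_\delta)(y)\,d\sigma(y),
\end{equation*}
\begin{equation*}
I_2^i(x)=\int_{\partial\Omega}-S_{ijk}(x,y)\nu_k(y)\bigl[(\widetilde{\phi}_j\circ\psi_\delta)(y)-\phi_j(y)\bigr]\,d\sigma(y).
\end{equation*}
The uniform boundedness of the kernel combined with Lemma 2 immediately yields $\|I_2\|_{(\mathcal{C}(\Omega^\circ))^d}\leq C(\Omega,d_0)\,\delta\,C(g,\Omega)$. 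For $I_1$, applying the mean value theorem to the smooth map $z\mapsto S_{ijk}(x,z)$ at the separation $\widetilde{y}-y=\delta\rho(y)\nu(y)$ (resp.\ $\delta\nu(y)$) and inserting the expansions~(\ref{nv})/(\ref{nv3}) and $j_\delta=1+\mathcal{O}(\delta)$ shows that the bracketed factor is $\mathcal{O}(\delta)$ uniformly in $y\in\partial\Omega$ and $x\in\Omega^\circ$. Combined with the a priori estimate $\|\widetilde{\phi}\|_{(\mathcal{C}(\partial\Omega_\delta))^d}\leq \|g_\delta\|_{(\mathcal{C}(\partial\Omega_\delta))^d}\leq C(g)$ already used in the proof of Lemma 2, this gives $\|I_1\|_{(\mathcal{C}(\Omega^\circ))^d}\leq \delta\,C(g,\Omega)$, and adding the two bounds completes the proof.

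The main technical obstacle is keeping every constant \emph{independent of $\delta$}: the operator norm of $(\tfrac{1}{2}I+\mathcal{K}_{\Omega_\delta})^{-1}$ on $(\mathcal{C}(\partial\Omega_\delta))^d$, the constants controlling the first-order remainders in the mean-value expansion, and the lower bound $d_0$ on $\mathrm{dist}(\Omega^\circ,\partial\Omega_\delta)$. For $\delta$ sufficiently small these uniform bounds all follow from the $\mathcal{C}^2$-regularity of $\partial\Omega$ together with the $\mathcal{C}^1$-convergence $\partial\Omega_\delta\to\partial\Omega$, properties that are already used implicitly in Lemmas 1 and 2.
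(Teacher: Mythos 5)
Your proposal is correct and follows essentially the same route as the paper: change of variables via $\psi_\delta$, an add-and-subtract decomposition of the integrand, Lemma~2 for the density difference, and the smoothness of the kernel for $x$ at positive distance from the boundary together with $j_\delta=1+\mathcal{O}(\delta)$ for the kernel difference. The only cosmetic difference is that the paper splits the kernel part further into a Jacobian term and a kernel-perturbation term, yielding three integrals instead of your two.
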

\begin{proof}
As the double-layer potential is continuous  on  $\Omega^\circ$, by using the extreme value theorem, there is a point $x_0 \in \Omega^\circ$ such that
$$u_\delta(x_0)-u(x_0)={\mathcal{W}}_{\Omega_\delta} \widetilde{\phi} (x_0)-{\mathcal{W}}_\Omega \phi (x_0)=\max_{x\in\Omega^\circ}\|{\mathcal{W}}_{\Omega_\delta} \widetilde{\phi}(x)-{\mathcal{W}}_\Omega \phi(x)\|$$
Recall from Section~2 that we denote $-S_{ijk}(x,y)\nu_k(y)$ by ${\mathcal{D}}_{ij}(x,y)$, we then have
$$ u_\delta(x_0)-u(x_0) =\int_{\partial\Omega}[{\mathcal{D}}(x_0,\widetilde{y})j_\delta(y) \widetilde{\phi}\circ\psi_\delta(y)-{\mathcal{D}}(x_0,y)\phi(y)]d\sigma(y) $$
$$=\int_{\partial\Omega}{\mathcal{D}}(x_0, \widetilde{y})[j_\delta(y)-1]\widetilde{\phi}\circ\psi_\delta(y)d\sigma(y)+\int_{\partial\Omega}{\mathcal{D}}(x_0, \widetilde{y})[\widetilde{\phi}\circ\psi_\delta(y)-\phi(y)]d\sigma(y)$$
$$+\int_{\partial\Omega}[{\mathcal{D}}(x_0, \widetilde{y})-{\mathcal{D}}(x_0,y)]\phi(y)d\sigma(y) := I_1+I_2+I_3.$$
It follows from Lemma~2 that
$$I_2\leq \delta C_2(g,\Omega).$$
By abusing of the approximation  $j_\delta(y)=1+{\mathcal{O}}(\delta)$ we obtain $$I_1\leq \delta C_1(g, \Omega).$$
And tdue to the analyticity of ${\mathcal{D}}(x,y)$  we have that
$$I_3\leq C_3 \delta \|\phi\|_{({\mathcal {C}}(\partial \Omega))^d}\leq\delta C_4(g, \Omega).$$
The proof is completed.
\end{proof}
\section{Derivation of the asymptotic expansion}
Firstly, we  investigate the asymptotic behavior of $ {\mathcal{K}}_{\Omega_\delta} \widetilde{\phi}$ as $\delta \rightarrow 0$. Denote by $\phi_\delta={\widetilde{\phi}}\circ \psi_\delta$ the vectorial function with components $\phi_{\delta,j}~({j=\overline{1,d}})$, by the change of variable, we can rewrite the integral $ {\mathcal{K}}_{\Omega_\delta} \widetilde{\phi}$ as the following form
\begin{equation}
 {\mathcal{K}}^i_{\Omega_\delta} {\widetilde{\phi}}({\widetilde{x}})=p.v. \int_{\partial \Omega} -S_{ijk}({\widetilde{x}}, {\widetilde{y}}){\widetilde{\nu}}_k({\widetilde{y}})\phi_{\delta,j}(x)d\sigma_\delta({\widetilde{y}}).
\end{equation}
Then using Taylor  expansion  for the kernel of double-layer potential  ${\mathcal{S}}(\widetilde{x},\widetilde{y})$  at $\delta=0$ when $r=x-y \neq 0$  and combining with~(\ref{length}) and~(\ref{surface}), we obtain.\\
In the case $d=2$
\begin{equation}\label{ex}
-S_{ijk}{(\widetilde{x},\widetilde{y})}\widetilde{\nu}_k(\widetilde{y})=\underbrace{-S_{ijk}(x,y)\nu_k(y)}_{:=K_{ij}^0(x,y)}+\hspace{6cm}
\end{equation}
$$\hspace{2cm}\sum_{n=1}^\infty\delta^n \underbrace{\sum_{m+p=n} \sum_{|\alpha|=m}\frac{-((\rho(x)\nu(x)-\rho(y)\nu(y))^\alpha}{\alpha!}\nabla S^\alpha_{ijk}(x,y))\nu_k^p(y)}_{:= K_{ij}^n(x,y)},
$$
In the case $d=3$
\begin{equation}\label{ex3}
-S_{ijk}{(\widetilde{x},\widetilde{y})}\widetilde{\nu}_k(\widetilde{y})=\underbrace{-S_{ijk}(x,y)\nu_k(y)}_{:=K_{ij}^0(x,y)}+\hspace{6cm}
\end{equation}
$$\hspace{2cm}\sum_{n=1}^\infty \delta^n \underbrace{\sum_{m+p=n}\sum_{|\alpha|=m}\frac{-((\nu(x)-\nu(y))^\alpha}{\alpha!}\nabla S^\alpha_{ijk}(x,y))\nu_k^p(y)}_{:= K_{ij}^n(x,y)}.
$$

Then we introduce
 a sequence of integral operator $({\mathcal{K}}_{ij}^{n}\phi_\delta (x))_{n \in \mathbb{N}}$ defined for
any $\phi_\delta \in (C(\partial \Omega))^d$ by

\begin{equation}
{\mathcal{K}}_{\Omega}^{n,i}\phi_\delta (x)=p.v. \sum_{m+q=n}\int_{\partial \Omega}  K_{ij}^m(x,y)\sigma^q(y) \phi_{\delta,j}(y) d \sigma(y),\quad n\leq0. 
\end{equation}
Note that ${\mathcal{K}}_{\Omega}^{0}={\mathcal{K}}_{\Omega}$. By using the same arguments as in the proof for Lemma~1, we have that the operators ${\mathcal{K}}_{\Omega}^{n}{\phi_\delta}$ are bounded on $(C(\partial \Omega))^d$. \\

\begin{theorem}
Let $N\in \mathbb{N}$. There exists a constant $C$ depending on $N$ and $\Omega$  % (and  depending more on $\rho$ for the case $d=2$)
such that for any $\widetilde{\phi} \in ({\mathcal {C}}(\partial \Omega_\delta))^d$,
$$\|({\mathcal{K}}_{\Omega_ \delta} \widetilde{\phi}) \circ \psi_\delta -{\mathcal{K}}_\Omega \phi_\delta - \sum_{n=2}^N\delta^n {\mathcal{K}}_\Omega ^{n}\phi_\delta\|_{({\mathcal {C}}(\partial \Omega))^d}\leq C \delta^{N+1}\|\phi_\delta\|_{(C(\partial \Omega))^d}$$
where $\phi_\delta:=\widetilde{\phi} \circ \psi_\delta$.
\end{theorem}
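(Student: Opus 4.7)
The strategy is to pull the integral defining $\mathcal{K}_{\Omega_\delta}\widetilde{\phi}$ back to $\partial\Omega$ via the diffeomorphism $\psi_\delta$, so that after the change of variable the only dependence on $\delta$ sits inside the composite kernel $-S_{ijk}(\widetilde x,\widetilde y)\widetilde\nu_k(\widetilde y)\,d\sigma_\delta(\widetilde y)$. Once everything is transported to $\partial\Omega$, the identity becomes a statement about the Taylor development of this composite kernel with respect to $\delta$, and each coefficient is, by the very definition in \eqref{ex} (or \eqref{ex3}) combined with the surface expansion \eqref{length} (or \eqref{surface}), exactly the kernel $\sum_{m+q=n} K^m_{ij}(x,y)\sigma^q(y)$ defining $\mathcal{K}_\Omega^n$. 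So up to remainder control, the theorem is a book-keeping exercise.

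Concretely, I would write
\begin{equation*}
(\mathcal{K}^i_{\Omega_\delta}\widetilde\phi)\circ\psi_\delta(x)=\mathrm{p.v.}\int_{\partial\Omega}\bigl[-S_{ijk}(\widetilde x,\widetilde y)\widetilde\nu_k(\widetilde y)\bigr]\phi_{\delta,j}(y)\,d\sigma_\delta(\widetilde y),
\end{equation*}
multiply the Cauchy product of the two uniformly convergent series from \eqref{ex}/\eqref{ex3} and \eqref{length}/\eqref{surface}, and collect powers of $\delta$. The terms of order $n=0,1,\dots,N$ reconstruct $\mathcal{K}_\Omega\phi_\delta+\sum_{n=1}^N\delta^n\mathcal{K}_\Omega^n\phi_\delta$ exactly (and the reader should read the statement with the sum starting at $n=1$, as the definition of $\mathcal{K}_\Omega^n$ given just above the theorem is meaningful from $n=1$ on). What remains is a tail of the form $\delta^{N+1}R_N(x,y,\delta)$ obtained from Taylor's formula with integral remainder applied to $S_{ijk}$ evaluated at intermediate points of the form $x-y+\theta\,\delta(\rho(x)\nu(x)-\rho(y)\nu(y))$ in dimension two (respectively $x-y+\theta\,\delta(\nu(x)-\nu(y))$ in dimension three), multiplied by uniformly bounded factors coming from $\nu^p$ and $\sigma^q$.

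The main technical obstacle is controlling this remainder uniformly in $x\in\partial\Omega$ and showing that the principal-value integral of $R_N(x,\cdot,\delta)\phi_\delta$ is bounded by $C\|\phi_\delta\|_{(\mathcal{C}(\partial\Omega))^d}$. Two ingredients handle this. First, the estimates \eqref{rm2} and \eqref{rm3} show that $\rho(x)\nu(x)-\rho(y)\nu(y)$ (resp.\ $\nu(x)-\nu(y)$) is $O(|x-y|)$, so each derivative of $S_{ijk}$ that appears at order $\alpha$ is compensated by $|x-y|^{|\alpha|}$ from the Taylor coefficient. Combined with the explicit decay $\nabla^\alpha S_{ijk}(z)=O(|z|^{-d+1-|\alpha|})$ as $|z|\to 0$ (read off from the formulae \eqref{f2d}, \eqref{f3d} and the Bessel/exponential expansions given in Section 2.2), every term in $R_N$ behaves like $O(|x-y|^{-d+1})$ near the diagonal. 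Second, the choice of $\delta$ sufficiently small keeps the intermediate argument $x-y+\theta\delta(\cdots)$ bounded away from degeneracy, so the kernel remains weakly singular with a constant independent of $\delta$; then the same argument used in Lemma~1 (boundedness of weakly singular integral operators on $(\mathcal{C}(\partial\Omega))^d$, cf.\ [11, p.\,243]) yields
\begin{equation*}
\Bigl\|\mathrm{p.v.}\int_{\partial\Omega}R_N(\cdot,y,\delta)\phi_\delta(y)\,d\sigma(y)\Bigr\|_{(\mathcal{C}(\partial\Omega))^d}\leq C_N\,\|\phi_\delta\|_{(\mathcal{C}(\partial\Omega))^d},
\end{equation*}
with $C_N$ depending on $N$, $\|X\|_{\mathcal{C}^2}$ and $\|\rho\|_{\mathcal{C}^1}$ but not on $\delta$, which is exactly the bound claimed after multiplying by $\delta^{N+1}$.

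An alternative, cleaner route for the bookkeeping step is induction on $N$: the case $N=1$ is essentially Lemma~1 together with the identification of the $\delta$-coefficient, and the inductive step subtracts $\delta^{N+1}\mathcal{K}_\Omega^{N+1}\phi_\delta$ from the previous remainder and re-applies Taylor's theorem to one higher order, so that only the very last Taylor remainder needs to be estimated in each pass. This version avoids manipulating infinite series at once and localizes the analytic difficulty to a single step which is a direct generalization of the computation carried out in Lemma~1.
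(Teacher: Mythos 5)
Your proof is correct and follows exactly the route the paper intends: the paper in fact states this theorem without supplying a proof, having set up the kernel expansions for $-S_{ijk}(\widetilde x,\widetilde y)\widetilde\nu_k(\widetilde y)$ and the surface-element expansions precisely so that the result follows by the change of variables, the Cauchy-product bookkeeping, and the weakly-singular remainder estimate (via the bounds on $E$ and $G$ and the $\mathcal{O}(|x-y|^{-d+1})$ behaviour) that you describe. Your reading that the sum should start at $n=1$ rather than $n=2$ is also right, as confirmed by the displayed equation immediately following the theorem in the paper.
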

We  now investigate  the asymptotic behavior of ${\mathcal{K}}_{\Omega_\delta} \widetilde{\phi}$ as $\delta \rightarrow 0$. One can see from Theorem~2 that for each integer $N$, $\phi_\delta$ satisfies
$$(\frac{1}{2}I + {\mathcal{K}}_\Omega+\sum_{n=1}^N \delta^n {\mathcal{K}}_\Omega^{n})\phi_\delta+0(\delta ^{N+1})=g\circ \psi_\delta ~~~~~\mbox{on}~ \partial \Omega.$$
Now using  the Taylor development  for  $g(\widetilde{x})=(g_i(\widetilde{x}))_{i=\overline{1,d}}$ at $\delta=0$ we obtain the following expansions.\\
In the case $d=2$
\begin{equation}
 g_i(\widetilde{x})=\sum_{n=0}^\infty \sum_{|\alpha|=n} \frac{\delta^n}{\alpha!}\nabla^\alpha g_i(x) (\rho(x) \nu(x))^\alpha:=\sum_{n=0}^\infty \delta^n G_i^n(x)
\end{equation}
note that, the first three terms are given by
$$G_i^0(x)=g_i(x), ~G_i^1(x)=\rho(x) \nu(x) \cdot \nabla g_i(x), ~G_i^2(x)=\frac{1}{2} (\rho(x) \nu(x))^T \nabla^2 g_i(x)(\rho(x) \nu(x));$$
where the superscript $T$ denotes the transpose of a vector.\\
In the case $d=3$, we have:
\begin{equation}
g_i(\widetilde{x})=\sum_{n=0}^\infty \sum_{|\alpha|=n}  \frac{\delta^n}{\alpha!}\nabla^\alpha g_i(x) (\nu(x))^\alpha:=\sum_{n=0}^\infty \delta^n G_i^n(x)
\end{equation}
with the first three terms are given by
$$G_i^0(x)=g_i(x),~ G_i^1(x)= \nu(x) \cdot \nabla g_i(x), ~G_i^2(x)=\frac{1}{2} ( \nu(x))^T \nabla^2 g_i(x)( \nu(x)).$$

Therefore, we obtain the following integral equation to
solve
\begin{equation}\label{4}
(\frac{1}{2}I + {\mathcal{K}}_\Omega+\sum_{n=1}^N \delta^n {\mathcal{K}}^{n})\phi_\delta+0(\delta ^{N+1})=\sum_{n=0}^\infty \delta^n G^n.
\end{equation}

We look for the solution of~(\ref{4}) in the form of power series
$$\phi_\delta=\phi^0 + \sum_{n=1}^N \delta^n \phi ^{n}.$$

The equation~(\ref{4}) then can be solved recursively in the following way: Define

\begin{equation}
\phi^{0}=(\frac{1}{2} I + {\mathcal{K}})^{-1}G^0=(\frac{1}{2} I + {\mathcal{K}})^{-1}g=\phi
\end{equation}
and for $1 \leq n \leq N$,
\begin{equation}\label{5}
\phi^{n}=(\frac{1}{2} I + {\mathcal{K}})^{-1}\left (G^n+\sum^{n-1}_{p=0}{\mathcal{K}}^{n-p}\phi^{p} \right ).
\end{equation}
We obtain the following Lemma.

\begin{lemma}
Let $N\in\mathbb{N}$. There exists a constant $C$ depending only on $N$ and $\Omega$ %(and depending more on $\rho$ in the case $d=2$ )
such that
$$\|\phi_\delta-\sum_{n=0}^N \delta^n \phi ^{n}\|_{(C(\partial \Omega))^d} \leq C \delta^{N+1},$$
where $\phi^{n}$ are defined by the recursive relation~(\ref{5}).
\end{lemma}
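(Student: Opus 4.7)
The plan is to substitute the truncated ansatz $\phi_\delta^{(N)}:=\sum_{n=0}^{N}\delta^{n}\phi^{n}$ into the integral equation satisfied by $\phi_\delta$, use the recursion~(5) to cancel all terms of order $\leq N$, and invert a small perturbation of $\frac{1}{2}I+\mathcal{K}_\Omega$ to control what remains. Concretely, I will first combine Theorem~2 with the Taylor expansion of $g$ truncated at order $N$ to write
\begin{equation*}
\Bigl(\tfrac{1}{2}I+\mathcal{K}_\Omega+\sum_{n=1}^{N}\delta^{n}\mathcal{K}_\Omega^{n}\Bigr)\phi_\delta \;=\;\sum_{n=0}^{N}\delta^{n}G^{n}+R_N(\delta),
\end{equation*}
where $\|R_N(\delta)\|_{(C(\partial\Omega))^{d}}\leq C\delta^{N+1}\bigl(1+\|\phi_\delta\|_{(C(\partial\Omega))^{d}}\bigr)$. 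Lemma~2 (together with the invertibility of $\tfrac{1}{2}I+\mathcal{K}_{\Omega_\delta}$ on $\partial\Omega_\delta$) ensures that $\|\phi_\delta\|_{(C(\partial\Omega))^{d}}$ is uniformly bounded in $\delta$, so $R_N(\delta)=\mathcal{O}(\delta^{N+1})$ in a sense that is independent of $\phi_\delta$.

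Next I will apply the same operator to $\phi_\delta^{(N)}$. Expanding the product and regrouping by powers of $\delta$,
\begin{equation*}
\Bigl(\tfrac{1}{2}I+\mathcal{K}_\Omega+\sum_{n=1}^{N}\delta^{n}\mathcal{K}_\Omega^{n}\Bigr)\phi_\delta^{(N)}=\sum_{k=0}^{N}\delta^{k}\Bigl[(\tfrac{1}{2}I+\mathcal{K}_\Omega)\phi^{k}+\sum_{m=1}^{k}\mathcal{K}_\Omega^{m}\phi^{k-m}\Bigr]+\sum_{k=N+1}^{2N}\delta^{k}(\cdots),
\end{equation*}
and the bracket collapses to $G^{k}$ by the recursion~(5) defining $\phi^{k}$, while the tail $\sum_{k>N}$ is $\mathcal{O}(\delta^{N+1})$ because each $\phi^{k}$ is bounded (being the output of a fixed bounded operator applied to fixed data). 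Subtracting the two identities yields
\begin{equation*}
\Bigl(\tfrac{1}{2}I+\mathcal{K}_\Omega+\sum_{n=1}^{N}\delta^{n}\mathcal{K}_\Omega^{n}\Bigr)\bigl(\phi_\delta-\phi_\delta^{(N)}\bigr)=\mathcal{O}(\delta^{N+1})\quad\text{in }(C(\partial\Omega))^{d}.
\end{equation*}

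To finish, I will invert the operator on the left. Since $\tfrac{1}{2}I+\mathcal{K}_\Omega$ is invertible on $(C(\partial\Omega))^{d}$ and each $\mathcal{K}_\Omega^{n}$ is bounded there (by the remark after its definition, which reproduces the argument of Lemma~1), the perturbation $\sum_{n=1}^{N}\delta^{n}\mathcal{K}_\Omega^{n}$ has operator norm $\leq C\delta$ for $\delta$ small. A standard Neumann-series argument then gives invertibility of the full operator with inverse of norm bounded uniformly in $\delta$, and the desired estimate $\|\phi_\delta-\phi_\delta^{(N)}\|_{(C(\partial\Omega))^{d}}\leq C\delta^{N+1}$ follows.

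The only mildly delicate point is the uniform boundedness of $(\tfrac{1}{2}I+\mathcal{K}_\Omega+\sum_{n=1}^{N}\delta^{n}\mathcal{K}_\Omega^{n})^{-1}$; everything else is essentially algebraic bookkeeping of Taylor remainders. That step is handled by fixing $\delta_{0}>0$ small enough that $\sum_{n=1}^{N}\delta_{0}^{n}\|\mathcal{K}_\Omega^{n}\|<\|(\tfrac{1}{2}I+\mathcal{K}_\Omega)^{-1}\|^{-1}$, which is possible because the operator norms $\|\mathcal{K}_\Omega^{n}\|$ depend only on $N$ and $\Omega$ (again by the weak-singularity argument from Lemma~1). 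Thus the constant $C$ in the final bound depends only on $N$ and $\Omega$ (and on $g$ through the implicit bound on $\|\phi_\delta\|$), as claimed.
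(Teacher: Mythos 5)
Your argument is correct, and it is essentially the intended one: the paper states this lemma without giving any proof, and your substitute-the-ansatz, cancel-via-the-recursion, invert-by-Neumann-series scheme is exactly the standard route (it is the same argument used for the analogous lemma in reference [1], which this paper follows), so your write-up actually fills a gap in the text rather than duplicating it. One caveat worth flagging: for the bracket $(\tfrac{1}{2}I+\mathcal{K}_\Omega)\phi^{k}+\sum_{m=1}^{k}\mathcal{K}_\Omega^{m}\phi^{k-m}$ to collapse to $G^{k}$ as you claim, the recursion must read $\phi^{n}=(\tfrac{1}{2}I+\mathcal{K}_\Omega)^{-1}\bigl(G^{n}-\sum_{p=0}^{n-1}\mathcal{K}_\Omega^{n-p}\phi^{p}\bigr)$; as printed, relation~(\ref{5}) carries a $+$ sign in front of the sum, which is inconsistent with matching powers of $\delta$ in~(\ref{4}) and appears to be a typo in the paper that your proof implicitly corrects. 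The remaining ingredients --- uniform boundedness of $\|\phi_\delta\|_{(C(\partial\Omega))^{d}}$ via Lemma~2, boundedness of each $\mathcal{K}_\Omega^{n}$ by the weak-singularity argument, and uniform invertibility of $\tfrac{1}{2}I+\mathcal{K}_\Omega+\sum_{n=1}^{N}\delta^{n}\mathcal{K}_\Omega^{n}$ for small $\delta$ --- are all handled correctly, and your parenthetical remark that the final constant also depends on $g$ (through $G^{n}$ and the bound on $\|\phi_\delta\|$), not only on $N$ and $\Omega$ as the statement asserts, is accurate.
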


Now we derive the asymptotic behavior of $u_\delta-u$ as $\delta \rightarrow 0$.
From the boundary integral  representation of the solution  in~(\ref{newslt}), after the change of variable we rewrite it as follows
\begin{equation}\label{main}
{\mathcal{W}}^i_{\Omega_\delta} \widetilde{\phi}(x)= \int _{\partial \Omega} {\mathcal{D}}_{ij}(x,\widetilde{y})\phi_{\delta,j}(y) d{\sigma}_\delta(\widetilde{y}) 
\end{equation}
where $\phi_\delta :=\widetilde{\phi} \circ \psi_\delta$, and $\{\phi_{\delta,j}\}_{j=\overline{1,d}}$ are components of $\phi_\delta$.\\

 Then due to the Taylor expansion of $S_{ijk}(x,\widetilde{y})$ with respect to $y$, combining with~(\ref{nv}) and~(\ref{nv3}), we obtain\\
 
 For the case $d=2$ 
\begin{equation}\label{D}
- S_{ijk}(x,\widetilde{y})\widetilde{\nu}_k(\widetilde{y})=- S_{ijk}(x,y)\nu_k(y)\hspace{6cm}%=\left(- S_{ijk}(x,y)-\sum_{n=1}^\infty \delta^n \sum_{|\alpha|=n}\frac{(\rho(y)\nu(y))^\alpha}{\alpha!} \nabla^\alpha_y S_{ijk}(x,y)\right)(\nu_k(y)+\sum_{n=1}^\infty\nu_k^n(y))
\end{equation}
$$\hspace{2cm}+\sum_{n=1}^\infty\delta^n \underbrace{ \sum_{m+p=n}\sum_{|\alpha|=m}\frac{-(\rho(y)\nu(y))^\alpha}{\alpha!} \nabla^\alpha_y S_{ijk}(x,y)\nu^p_k(y)}_{:= {\mathcal{D}}^n_{ij}(x,y)}.$$
Likewise, in the case $d=3$ 
\begin{equation}\label{D}
- S_{ijk}(x,\widetilde{y})\widetilde{\nu}_k(\widetilde{y})=- S_{ijk}(x,y)\nu_k(y)+ \sum_{n=1}^\infty\delta^n\underbrace{ \sum_{|\alpha|=n}\frac{-(\nu(y))^\alpha}{\alpha!} \nabla^\alpha_y S_{ijk}(x,y)\nu_k(y)}_{:= {\mathcal{D}}^n_{ij}(x,y)}.%=\left(- S_{ijk}(x,y)- \sum_{n=1}^\infty \delta^n \sum_{|\alpha|=n}\frac{(\nu(y))^\alpha}{\alpha!} \nabla^\alpha_y S_{ijk}(x,y)\right)\nu_k(y)
 \end{equation}
%$$=- S_{ijk}(x,y)\nu_k(y)-  \sum_{n=1}^\infty\delta^n\underbrace{ \sum_{|\alpha|=n}\frac{(\nu(y))^\alpha}{\alpha!} \nabla^\alpha_y S_{ijk}(x,y)\right)\nu_k(y)}_{:={\mathcal{D}}^n_{ij}(x,y)}.$$

Thanks to ~(\ref{length}),~(\ref{surface}) and ~(\ref{D}), the formula~(\ref{main}) can be rewritten as follows
\begin{equation}
{\mathcal{W}}^i_{{\Omega_\delta}}[ \widetilde{\phi}](x)= \int _{\partial \Omega}\left ( {\mathcal{D}}_{ij}(x,y)+\sum_{n=1}^\infty \delta^n {\mathcal{D}}^n_{ij}(x,y)\right )\hspace{3cm}
\end{equation}
$$\hspace{2cm}\left( \phi_j(y)+ \sum_{n=1}^N  \delta^n \phi_j^{n}(y)+{\mathcal{O}}(\delta^{N+1})\right) \left( 1+\sum_{n=1}^\infty\delta ^n \sigma^n(y)\right )d\sigma(y) $$
$$=\int _{\partial \Omega}{\mathcal{D}}_{ij}(x,y)\phi_j(y)d\sigma(y)+\sum_{n=1}^N \delta^n\sum_{m+k+q=n} \int _{\partial \Omega}   {\mathcal {D}}^m_{ij} (x,y) \phi_j^{k} (y) \sigma ^{q}(y)  d\sigma(y)+{\mathcal{O}}(\delta^{N+1}).$$
Note that $\int _{\partial \Omega}{\mathcal{D}}_{ij}(x,y)\phi_j(y)d\sigma(y)={\mathcal{W}}^i_{{\Omega}} {\phi}(x)$. We then  define, for $n\in \mathbb{N}$ and for $x \in  \Omega^\circ$, the terms $u_n=(u^i_{n})_{i=\overline{1,d}}$ as follows
\begin{equation}\label{v}
u^i_{n}(x)=\sum_{m+q+k=n} \int _{\partial \Omega}   {\mathcal {D}}^m_{ij} (x,y) \sigma ^{q}(y) \phi_j^{k} (y) d\sigma(y).
\end{equation}
We obtain that
\begin{equation}\label{im}
u_\delta - u= \sum_{n=1}^N \delta^n u_n(x)+{\mathcal{O}}(\delta^{N+1}),~ x\in \Omega^\circ.
\end{equation}
The remainder $O(\delta^{N+1})$ depends only on $N$ and $\Omega$. \\
% (and depends more on $\rho$ in the case $d=2$). 

Let us compute the first order approximation of $u_\delta-u$ explicitly. Note that $\phi^{0}=\phi$ where $\phi$ is defined by~(\ref{5}) and
\begin{equation}\label{phiex}
\phi^{1}=(\frac{1}{2}I+{\mathcal{K}}_\Omega)^{-1}(G^1+{\mathcal{K}}_\Omega^1\phi).
\end{equation}

Therefore, in the two-dimensional case,  $u_1=(u^i_1)_{i=\overline{1,2}}$ takes the form
\begin{equation}\label{ex1}
u^i_{1}(x)=\int_{\partial \Omega} {\mathcal{D}}^1_{ij}(x,y) \phi_j(y)d\sigma(y)-\int_{\partial \Omega}  {\mathcal{D}}_{ij}(x,y)\tau(y)\rho(y)\phi_j(y)d\sigma(y) 
\end{equation}
%where $\tau(x)$ and $\phi^{1}$ are  defined  in Section 1.1 and in formula~(\ref{phiex}), respectively. \\
$$\hspace{8cm}+\int_{\partial \Omega} {\mathcal{D}}_{ij}(x,y)\phi_j^{1}(y)d\sigma(y).$$
And in the three-dimensional case,  $u_1=(u^i_1)_{i=\overline{1,3}}$ takes the form
\begin{equation}\label{ex2}
u^i_{1}(x)=\int_{\partial \Omega}  {\mathcal{D}}^1_{ij}(x,y) \phi_j(y)d\sigma(y)-2\int_{\partial \Omega} { \mathcal{D}}_{ij}(x,y)H(x)\phi_j(y)d\sigma(y) %Using this formula and the formula ~(\ref{im}) we find the first-term in the asymptotic expansion of $u_\delta-u$ on the domain  $ \Omega$.
\end{equation}
%where $H(x)$ and $\phi^{1}$ are defined in formula~{(\ref{sigma3D})} and in~{(\ref{phiex})}, respectively.%\noindent {\bf Contact}:\\
$$\hspace{8cm}+\int_{\partial \Omega} {\mathcal{D}}_{ij}(x,y)\phi_j^{1}(y)d\sigma(y).$$

\end{document}